\definecolor{thmcolor}{rgb}{0,0,.4} 
\definecolor{remarkcolor}{rgb}{0,.1,0} 
\definecolor{quecolor}{rgb}{.2,.2,0} 
\definecolor{axcolor}{rgb}{.3,0,.3}
\definecolor{thmbgcolor}{rgb}{0.9,0.9,1} 
\definecolor{rmbgcolor}{rgb}{0.6,1,0.6} 
\theoremstyle{definition} \newtheorem{thm}{\colorbox{thmbgcolor}{\textcolor{thmcolor}{Theorem}}}[section] 
\theoremstyle{definition} \newtheorem{cor}[thm]{\colorbox{thmbgcolor}{\textcolor{thmcolor}{Corollary}}} 
\theoremstyle{definition} \newtheorem{lem}[thm]{\colorbox{thmbgcolor}{\textcolor{thmcolor}{Lemma}}}
\theoremstyle{definition} 
\theoremstyle{remark}  
\theoremstyle{remark} \newtheorem{conj}[thm]{\colorbox{rmbgcolor}{\sc\textcolor{remarkcolor}{Conjecture}}}
\theoremstyle{definition}  
\theoremstyle{definition} \newtheorem{rem}[thm]{\colorbox{rmbgcolor}{\textcolor{remarkcolor}{Remark}}}
\theoremstyle{definition}
\newcommand{\taxis}{\mathsf{t\text{-}axis}}
\newcommand{\timed}{\mathsf{time}}
\newcommand{\sqspace}{\mathsf{space}^2}
\newcommand{\vo}{\mathbf{\bar o}}
\newcommand{\vx}{\mathbf{\bar x}}
\newcommand{\vy}{\mathbf{\bar y}}
\newcommand{\vv}{\mathbf{\bar v}}
\newcommand{\vw}{\mathbf{\bar w}}
\newcommand{\vet}{\mathbf{\bar1}}
\newcommand{\de}{:=}
\newcommand{\rac}{\mathbb{Q}}
\newcommand{\R}{\mathbb{R}}
\newcommand{\defiff}{\ \stackrel{\;def}{\Longleftrightarrow}\ }
\newcommand{\IOb}{\ensuremath{\mathsf{IOb}}} 
\newcommand{\B}{\ensuremath{\mathit{B}}} 
\newcommand{\Ph}{\ensuremath{\mathsf{Ph}}} 
\newcommand{\Q}{\ensuremath{\mathit{Q}}} 
\newcommand{\W}{\ensuremath{\mathsf{W}}} 
\newcommand{\ev}{\ensuremath{\mathsf{ev}}} 
\newcommand{\Id}{\ensuremath{\mathsf{Id}}} 
\newcommand{\wl}{\ensuremath{\mathsf{wl}}}
\newcommand{\w}{\ensuremath{\mathsf{w}}}
\newcommand{\num}{\textit{Num}}
\newcommand{\ax}[1]{\textcolor{axcolor}{\ensuremath{\mathsf{#1}}}} 
\begin{document}
\title{Special Relativity over the Field of Rational Numbers}
\author{Judit X.~Madar\'asz and Gergely Sz\'ekely} \address{Alfr\'ed
  R\'enyi Institute of Mathematics \\ Hungarian Academy of
  Sciences \\ Re\'altanoda utca 13-15, H-1053, Budapest, Hungary}
\email{\{madarasz.judit, szekely.gergely\}@renyi.mta.hu}
\date{\today}

\begin{abstract}
We investigate the question: what structures of numbers (as physical
quantities) are suitable to be used in special relativity?  The answer to
this question depends strongly on the auxiliary assumptions we add to
the basic assumptions of special relativity. We show that there is a
natural axiom system of special relativity which can be modeled even
over the field of rational numbers.
\end{abstract}
\keywords{relativity theory, special relativity, rational numbers, axiomatic theories, first-order logic}
\maketitle

\section{Introduction}

In this paper, we investigate, within an axiomatic framework, the
question: what structures of numbers (as physical quantities) are
suitable to be used in special relativity?  There are several reasons
to investigate this kind of questions in the case of any theory of
physics. First of all, we cannot experimentally verify whether the
structure of quantities is isomorphic to the field of real
numbers. Moreover, the fact that the outcome of every measurement is a
finite decimal suggests that rational numbers (or even integers)
should be enough to model physical quantities. Another reason is that
these investigations lead to a deeper understanding of the connection
of the mathematical assumptions about the quantities and the other
(physical) assumptions of the theory. Hence these investigations lead
to a deeper understanding of any theory of physics, which may  come
handy if we have to change some of the basic assumptions for some
reason. For a more general perspective of this research direction, see
\cite{wnst}.

So in general we would like to investigate the question
\begin{center}
{\it ``What structure can numbers have in a certain physical theory?''}
\end{center}
To introduce the central concept of our investigation, let \ax{Th} be
a theory of physics that contains the concept of numbers (as physical
quantities) together with some algebraic operations on them (or at
least these concepts are definable in \ax{Th}). In this case, we can
introduce notation $\num(\ax{Th})$ for the class of the possible
quantity structures of theory \ax{Th}:
\begin{multline}
\num(\ax{Th})=\{\mathfrak{Q}:\mathfrak{Q} \text{ is a structure of
  quantities}\\\text{over which \ax{Th} has a model.}\}
\end{multline}

In this paper, we investigate our question only in the case of special
relativity. However, this question can be investigated in any other
physical theory the same way.

We will see an axiom on observers implying that positive numbers have
square roots. Therefore, we recall that {\bf Euclidean fields}, which
got their names after their role in Tarski's first-order logic
axiomatization of Euclidean geometry \cite{TarskiElge}, are ordered
fields in which positive numbers have square roots.

Our axiom system for $d$-dimensional special relativity (\ax{SpecRel_d},
see p.\pageref{specrel}) captures the kinematics of special relativity
perfectly if $d\ge3$, see Theorem~\ref{thm-poi}.  Without any extra assumptions
\ax{SpecRel_d} has a model over every ordered field, i.e.,
\begin{equation*}
\num(\ax{SpecRel_d})=\{\mathfrak{Q}:\mathfrak{Q} \text{ is an ordered
  field} \},
\end{equation*}
see Remark~\ref{rem-of}. Therefore, \ax{SpecRel} has a model over
$\rac$ (the field of rational numbers), too.  However, if we assume
that inertial observes can move with arbitrary speed less than that of
light, see \ax{AxThExp} on p.\pageref{axthexp}, then every positive
number has to have a square root if $d\ge 3$ by Theorem~\ref{thm-eof},
i.e.,
\begin{equation*}
\num(\ax{SpecRel_d} + \ax{AxThExp})=\{\mathfrak{Q}:\mathfrak{Q} \text{
  is a Euclidean field} \} \text{ if } d\ge3,
\end{equation*}
see \cite{wnst}. In particular, the number structure cannot be the
field of rational numbers if \ax{AxThExp} is assumed and $d\ge3$, i.e.,
\begin{equation*}
\rac\not\in\num(\ax{SpecRel_d} + \ax{AxThExp})\text{ if } d\ge3.
\end{equation*}

Theorem~\ref{thm-rac}, the main result of this paper, shows that our
axiom system \ax{SpecRel} has a model over $\rac$ (in any dimension)
if we assume axiom \ax{AxThExp} only approximately, i.e.,
\begin{equation*}
\rac\in\num(\ax{SpecRel_d} + \ax{AxThExp^-})\text{ if } d\ge2,
\end{equation*}
 see the precise formulation of \ax{AxThExp^-} on
 p.\pageref{axthexp-}. Assuming \ax{AxThExp^-} instead of \ax{AxThExp}
 is reasonable because we cannot be sure in anything perfectly
 accurately in physics. Theorem~\ref{thm-rac} implies that
 \ax{SpecRel} + \ax{AxThExp} can be modeled over every subfield of the
 field of real numbers ($\R$), see Corollary~\ref{cor-arch}; and we
 conjecture that this axiom system has a model over every ordered
 field, see Conjecture~\ref{conj-of}.

An interesting and related approach of Mike Stannett introduces two
structures one for the measurable numbers and one for the theoretical
numbers and assumes that the set of measurable numbers is dense in the
set of theoretical numbers, see \cite{Mike}.

We chose first-order predicate logic to formulate our axioms because
experience (e.g., in geometry and set theory) shows that this logic is
the best logic for providing an axiomatic foundation for a theory. A
further reason for choosing first-order logic is that it is a well
defined fragment of natural language with an unambiguous syntax and
semantics, which do not depend on set theory. For further reasons,
see, e.g., \cite[\S Why FOL?]{pezsgo}, \cite{ax}, \cite[\S 11]{SzPhd},
\cite{vaananen}, \cite{wolenski}.

\section{The language of our theories}
\label{lang-s}

To our investigation, we need an axiomatic theory of special
relativity. Therefore, we will recall our axiom system \ax{SpecRel_d} in Section~\ref{ax-s}.  To
write up any axiom system, we have to choose the set of basic symbols
of its language, i.e., what objects and relations between them
will be used as basic concepts.

Here we will use the following two-sorted\footnote{That our theory is
  two-sorted means only that there are two types of basic objects
  (bodies and quantities) as opposed to, e.g., Zermelo--Fraenkel set
  theory where there is only one type of basic objects (sets).}
language of first-order logic (FOL) parametrized by a natural number
$d\ge 2$ representing the dimension of spacetime:
\begin{equation}
\{\B,\Q\,;  \IOb, \Ph,+,\cdot,\le,\W\},
\end{equation}
where $\B$ (bodies) and $\Q$ (quantities) are the two sorts,
 $\IOb$ (inertial observers) and $\Ph$ (light
signals) are one-place relation symbols of
sort $\B$, $+$ and $\cdot$ are two-place function symbols of sort
$\Q$, $\le$ is a two-place relation symbol of sort $\Q$, and $\W$ (the
worldview relation) is a $d+2$-place relation symbol the first two
arguments of which are of sort $\B$ and the rest are of sort $\Q$.

Relations $\IOb(m)$ and $\Ph(p)$ are translated as ``\textit{$m$ is an
  inertial observer},'' and ``\textit{$p$ is a light signal},''
respectively. To speak about coordinatization of observers, we
translate relation $\W(k,b,x_1,x_2,\ldots,x_d)$ as ``\textit{body $k$
  coordinatizes body $b$ at space-time location $\langle x_1,
  x_2,\ldots,x_d\rangle$},'' (i.e., at space location $\langle
x_2,\ldots,x_d\rangle$ and instant $x_1$).

{\bf Quantity  terms} are the variables of sort $\Q$ and what can be
built from them by using the two-place operations $+$ and $\cdot$,
{\bf body terms} are only the variables of sort $\B$.
$\IOb(m)$, $\Ph(p)$, $\W(m,b,x_1,\ldots,x_d)$, $x=y$, and $x\le y$
where $m$, $p$, $b$, $x$, $y$, $x_1$, \ldots, $x_d$ are arbitrary
terms of the respective sorts are so-called {\bf atomic formulas} of
our first-order logic language. The {\bf formulas} are built up from
these atomic formulas by using the logical connectives \textit{not}
($\lnot$), \textit{and} ($\land$), \textit{or} ($\lor$),
\textit{implies} ($\rightarrow$), \textit{if-and-only-if}
($\leftrightarrow$) and the quantifiers \textit{exists} ($\exists$)
and \textit{for all} ($\forall$).

To make them easier to read, we omit the outermost universal
quantifiers from the formalizations of our axioms, i.e., all the free
variables are universally quantified.

We use the notation $\Q^n$ for the set of all $n$-tuples of elements
of $\Q$. If $\vx\in \Q^n$, we assume that $\vx=\langle
x_1,\ldots,x_n\rangle$, i.e., $x_i$ denotes the
$i$-th component of the $n$-tuple $\vx$. Specially, we write $\W(m,b,\vx)$ in
place of $\W(m,b,x_1,\dots,x_d)$, and we write $\forall \vx$ in place
of $\forall x_1\dots\forall x_d$, etc.

We use first-order logic set theory as a meta theory to speak about model
theoretical terms, such as models, validity, etc.  The {\bf models} of
this language are of the form
\begin{equation}
{\mathfrak{M}} = \langle \B, \Q; \IOb_\mathfrak{M},\Ph_\mathfrak{M},+_\mathfrak{M},\cdot_\mathfrak{M},\le_\mathfrak{M},\W_\mathfrak{M}\rangle,
\end{equation}
where $\B$ and $\Q$ are nonempty sets, $\IOb_\mathfrak{M}$ and
$\Ph_\mathfrak{M}$ are subsets of $\B$, $+_\mathfrak{M}$ and
$\cdot_\mathfrak{M}$ are binary functions and $\le_\mathfrak{M}$ is a
binary relation on $\Q$, and $\W_\mathfrak{M}$ is a subset of
$\B\times \B\times \Q^d$.  Formulas are interpreted in $\mathfrak{M}$
in the usual way.  For the precise definition of the syntax and
semantics of first-order logic, see, e.g., \cite[\S 1.3]{CK}, \cite[\S
  2.1, \S 2.2]{End}.

\section{Axioms for special relativity}
\label{ax-s}
Now having our language fixed, we can recall axiom system
\ax{SpecRel_d}, as well as two theorems on \ax{SpecRel_d} related to our
investigation.

The key axiom of special relativity states that the speed of light is
the same in every direction for every inertial observers.
\begin{description}
\item[\underline{\ax{AxPh}}] For any inertial observer, the speed of
  light is the same everywhere and in every direction (and it is
  finite). Furthermore, it is possible to send out a light signal in
  any direction (existing according to the coordinate system)
  everywhere:
\begin{multline*}
\IOb(m)\rightarrow \exists c_m\Big[c_m>0\land \forall \vx\vy
 \Big(\exists p \big[\Ph(p)\land \W(m,p,\vx)\\\land
    \W(m,p,\vy)\big] \leftrightarrow \sqspace(\vx,\vy)= c_m^2\cdot\timed(\vx,\vy)^2\Big)\Big],\footnotemark
\end{multline*}
\end{description}
where $\sqspace(\vx,\vy)\de(x_2-y_2)^2+\ldots +(x_d-y_d)^2$ and
$\timed(\vy,\vy)\de x_1-y_1$. \footnotetext{That is, if $m$ is an
  inertial observer, there is a is a positive quantity $c_m$ such that
  for all coordinate points $\vx$ and $\vy$ there is a light signal
  $p$ coordinatized at $\vx$ and $\vy$ by observer $m$ if and only if
  equation $\sqspace(\vx,\vy)= c_m^2\cdot\timed(\vx,\vy)^2$ holds.}

To get back the intended meaning of axiom \ax{AxPh} (or even to be
able to define subtraction from addition), we have to assume some
properties of numbers. 

In our next axiom, we state some basic properties of addition,
multiplication and ordering true for real numbers.
\begin{description}
\item[\underline{\ax{AxOField}}]
 The quantity part $\langle \Q,+,\cdot,\le \rangle$ is an ordered field, i.e.,
\begin{itemize}
\item  $\langle\Q,+,\cdot\rangle$ is a field in the sense of abstract
algebra; and
\item 
the relation $\le$ is a linear ordering on $\Q$ such that  
\begin{itemize}
\item[i)] $x \le y\rightarrow x + z \le y + z$ and 
\item[ii)] $0 \le x \land 0 \le y\rightarrow 0 \le xy$
holds.
\end{itemize}
\end{itemize}
\end{description}

Using axiom \ax{AxOFiled} instead of assuming that the structure of
quantities is the field of real numbers not just makes our theory more
flexible, but also makes it possible to meaningfully investigate our
main question.  Another reason for using \ax{AxOField} instead of $\R$
is that we cannot experimentally verify whether the structure of
physical quantities are isomorphic to $\R$. Hence the assumption that
the structure of quantities is $\R$ cannot be empirically
supported. The two properties of real numbers which are the most
difficult to defend from empirical point of view are the Archimedean
property, see \cite{Rosinger08}, \cite[\S
  3.1]{Rosinger09},\cite{Rosinger11b}, \cite{Rosinger11a}, and the
supremum property.\footnote{The supremum property (i.e., that every
  nonempty and bounded subset of the numbers has a least upper bound)
  implies the Archimedean property. So if we want to get ourselves
  free from the Archimedean property, we have to leave this one, too.}

We also have to support \ax{AxPh} with the assumption that all
observers coordinatize the same ``external" reality (the same set of
events).  By the {\bf event} occurring for observer $m$ at point
$\vx$, we mean the set of bodies $m$ coordinatizes at $\vx$:
\begin{equation}
\ev_m(\vx)\de\{ b : \W(m,b,\vx)\}.
\end{equation}

\begin{description}
\item[\underline{\ax{AxEv}}]
All inertial observers coordinatize the same set of events:
\begin{equation*}
\IOb(m)\land\IOb(k)\rightarrow  \exists \vy\, \forall b
\big[\W(m,b,\vx)\leftrightarrow\W(k,b,\vy)\big].
\end{equation*}
\end{description}
From now on, we will use $\ev_m(\vx)=\ev_k(\vy)$ to abbreviate the
subformula $\forall b [\W(m,b,\vx)\leftrightarrow\W(k,b,\vy)]$ of
\ax{AxEv}. 

These three axioms are enough to capture the essence of special
relativity. However, let us assume two more simplifying axioms.

\begin{description}
\item[\underline{\ax{AxSelf}}]
Any inertial observer is stationary relative to himself:
\begin{equation*}
\IOb(m)\rightarrow \forall \vx\big[\W(m,m,\vx) \leftrightarrow x_2=\ldots=x_d=0\big].
\end{equation*}
\end{description}
Our last axiom on inertial observers is a symmetry axiom saying that
they use the same units of measurement.

\begin{description}
\item[\underline{\ax{AxSymD}}]
Any two inertial observers agree as to the spatial distance between
two events if these two events are simultaneous for both of them;
furthermore, the speed of light is 1 for all observers:
\begin{multline*}
\IOb(m)\land\IOb(k) \land x_1=y_1\land x'_1=y'_1\land
\ev_m(\vx)=\ev_k(\vx') \\\land \ev_m(\vy)=\ev_k(\vy') \rightarrow
\sqspace(\vx,\vy)=\sqspace(\vx',\vy')
\text{ and }\\
\IOb(m)\rightarrow\exists
p\big[\Ph(p)\land\W(m,p,0,\ldots,0)\land\W(m,p,1,1,0,\ldots,0)\big].
\end{multline*}
\end{description}

Let us introduce an axiom system for special relativity as the collection
of the five simple axioms above:
\begin{equation*}\label{specrel}
\ax{SpecRel_d} \de \ax{AxPh}+\ax{AxOField}+ \ax{AxEv}+\ax{AxSelf}+
\ax{AxSymD}.
\end{equation*}

To show that the five simple axioms of \ax{SpecRel_d} capture special
relativity well, let us introduce the concept of {\bf worldview
  transformation} between observers $m$ and $k$ (in symbols,
$\w_{mk}$) as the binary relation on $\Q^d$ connecting the coordinate
points where $m$ and $k$ coordinatize the same events:
\begin{equation}\label{eq-ww}
\w_{mk}(\vx,\vx')\defiff \forall b\big[\W(m,b,\vx)\leftrightarrow \W(k,b,\vx')\big].
\end{equation}

Map $P:\Q^d\rightarrow\Q^d$ is called a {\bf Poincar\'e transformation} iff
it is an affine bijection having the following property
\begin{equation}
\timed(\vx,\vy)^2-\sqspace(\vx,\vy)=\timed(\vx',\vy')^2-\sqspace(\vx',\vy')
\end{equation}
for all $\vx,\vy,\vx',\vy'\in\Q^d$ for which  $P(\vx)=\vx'$ and $P(\vy)=\vy'$.

Theorem~\ref{thm-poi} shows that our axiom system \ax{SpecRel_d}
captures the kinematics of special relativity since it implies that
the worldview transformations between inertial observers are Poincar\'e transformations.
\begin{thm}\label{thm-poi}
Let $d\ge3$. Assume \ax{SpecRel_d}. Then $\w_{mk}$ is a Poincar\'e
transformation if $m$ and $k$ are inertial
observers.
\end{thm}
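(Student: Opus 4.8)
The plan is to derive the theorem from four ingredients, each using a different part of $\ax{SpecRel_d}$. First I would normalize the speed of light. Instantiating the second conjunct of $\ax{AxSymD}$ produces, for every inertial $m$, a light signal coordinatized at $\langle0,\dots,0\rangle$ and at $\langle1,1,0,\dots,0\rangle$; feeding this pair into the $\rightarrow$ direction of $\ax{AxPh}$ gives $1=c_m^2\cdot1$, so $c_m=1$. After this step the ``a light signal connects $\vx$ and $\vy$'' relation of \emph{every} observer is the single Minkowski relation $\sqspace(\vx,\vy)=\timed(\vx,\vy)^2$, and the form $\timed(\vx,\vy)^2-\sqspace(\vx,\vy)$ whose invariance we must establish is observer-independent.

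Next I would record two properties of $\w_{mk}$. That $\w_{mk}$ is a bijection of $\Q^d$: its domain and range are all of $\Q^d$ directly from $\ax{AxEv}$ (applied to the pairs $m,k$ and $k,m$), which for every $\vx$ yields some $\vx'$ with $\ev_m(\vx)=\ev_k(\vx')$ and conversely; single-valuedness follows from injectivity of $\ev_k$, which I would prove by noting that for $\vx''\ne\vx'$ one can choose, via $\ax{AxPh}$, a lightlike direction $\vv$ from $\vx'$ such that $\vx''$ is not lightlike-separated from $\vx'+\vv$, so that the light signal through $\vx'$ and $\vx'+\vv$ lies in $\ev_k(\vx')$ but not in $\ev_k(\vx'')$. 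That $\w_{mk}$ preserves lightlike separation in both directions: if $\sqspace(\vx,\vy)=\timed(\vx,\vy)^2$ then $\ax{AxPh}$ gives a light signal $p$ with $\W(m,p,\vx)$ and $\W(m,p,\vy)$; transporting $p$ through the event identifications $\ev_m(\vx)=\ev_k(\vx')$ and $\ev_m(\vy)=\ev_k(\vy')$ yields $\W(k,p,\vx')$ and $\W(k,p,\vy')$, whence $\sqspace(\vx',\vy')=\timed(\vx',\vy')^2$ by $\ax{AxPh}$ for $k$, and the converse direction uses $\w_{km}=\w_{mk}^{-1}$.

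With these two properties I would invoke an Alexandrov--Zeeman theorem valid over ordered fields: for $d\ge3$, a bijection of $\Q^d$ preserving lightlike separation in both directions is affine and preserves the Minkowski form up to a positive factor, i.e.\ there is $\lambda>0$ with $\timed(\vx',\vy')^2-\sqspace(\vx',\vy')=\lambda\big(\timed(\vx,\vy)^2-\sqspace(\vx,\vy)\big)$ whenever $\w_{mk}(\vx,\vx')$ and $\w_{mk}(\vy,\vy')$. I expect this to be the main obstacle: it is the only genuinely nontrivial geometric input, it is exactly where the hypothesis $d\ge3$ is indispensable (for $d=2$ the light ``cone'' is a pair of lines with far more symmetries and the statement is false), and its classical real-field proof has to be carried out over an arbitrary ordered field, so in practice I would cite it from earlier work rather than reprove it.

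Finally I would remove the factor $\lambda$ using the first conjunct of $\ax{AxSymD}$. Since $\w_{mk}$ is affine and $d\ge3$, the image of an $m$-simultaneity hyperplane $\{x_1=\text{const}\}$ is a hyperplane meeting a $k$-simultaneity hyperplane $\{x_1'=\text{const}\}$ in an affine set of dimension $\ge d-2\ge1$; hence there are distinct $\vx,\vy$ with $x_1=y_1$ whose images $\vx',\vy'$ satisfy $x_1'=y_1'$. For such a pair $\timed(\vx,\vy)=\timed(\vx',\vy')=0$, so the displayed identity reads $\sqspace(\vx',\vy')=\lambda\,\sqspace(\vx,\vy)$, while $\ax{AxSymD}$ forces $\sqspace(\vx',\vy')=\sqspace(\vx,\vy)$; as $\sqspace(\vx,\vy)\ne0$ this gives $\lambda=1$. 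Thus $\w_{mk}$ is an affine bijection preserving $\timed^2-\sqspace$ exactly, i.e.\ a Poincar\'e transformation. ($\ax{AxSelf}$ enters along the way to fix each observer's worldline to the time axis and to keep the simultaneity hyperplanes genuine.)
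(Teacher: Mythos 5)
The paper itself does not prove Theorem~\ref{thm-poi}: it cites \cite{wnst} (and points to similar results over Euclidean fields in \cite{AMNSamples}, \cite{logst}, \cite{SzPhd}). Your proposal reconstructs the standard architecture of those proofs: deriving $c_m=1$ from \ax{AxPh} and the second conjunct of \ax{AxSymD}; getting bijectivity of $\w_{mk}$ and bidirectional preservation of lightlike separation from \ax{AxEv} and \ax{AxPh}; invoking an Alexandrov--Zeeman type theorem; and finally using the first conjunct of \ax{AxSymD}, via a hyperplane-intersection argument, to normalize the conformal factor. The first, second, and last of these steps are sound as written, including the observation that $d\ge3$ is needed both for Alexandrov--Zeeman and to make the doubly-simultaneous pairs exist.

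The genuine gap is in the Alexandrov--Zeeman step. \ax{SpecRel_d} assumes only \ax{AxOField}, and over an arbitrary ordered field $\Q$ the lemma you invoke --- that a bijection of $\Q^d$ preserving lightlike separation in both directions is \emph{affine} and preserves the Minkowski form up to a positive factor --- is false. If $\varphi$ is a nontrivial automorphism of $\Q$ (such ordered fields exist, e.g.\ $\rac(t)$ ordered with $t$ infinitely large admits $t\mapsto 2t$), then $\vx\mapsto\langle\varphi(x_1),\dots,\varphi(x_d)\rangle$ is a bijection of $\Q^d$ preserving the polynomial relation $\sqspace(\vx,\vy)=\timed(\vx,\vy)^2$ in both directions, yet it is not affine. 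The correct field-level theorem (and the one actually available to cite) concludes only that such a map is a Poincar\'e transformation composed with a dilation \emph{and a field-automorphism-induced map}; this is how the fundamental theorems in \cite{logst} and \cite{SzPhd} are stated. Consequently your last step must eliminate the automorphism $\varphi$ as well as the dilation factor $\lambda$, and as written it only handles $\lambda$. The repair uses exactly the doubly-simultaneous pairs you construct: along the ($\ge1$-dimensional) affine set of points that remain simultaneous for both observers, the \ax{AxSymD} identity gives $\lambda\varphi(t^2c)=t^2c$ for all $t\in\Q$, where $c\neq0$ is the value of $\sqspace$ on a direction vector of that set; taking $t=1$ yields $\lambda\varphi(c)=c$, hence $\varphi(t^2)=t^2$ for all $t$, so $\varphi(t)=\pm t$ pointwise, and additivity together with $\varphi(1)=1$ forces $\varphi$ to be the identity, after which $\lambda=1$ follows. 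Without this extra argument your proof rests on a false lemma (though it would be correct if the quantity field were fixed to be $\rac$ or $\R$, which have no nontrivial automorphisms); with it, your route goes through and matches the known proof.
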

For the proof of Theorem~\ref{thm-poi}, see \cite{wnst}.  For a
similar result over Euclidean fields, see, e.g., \cite[Thms. 1.4 \&
  1.2]{AMNSamples}, \cite[Thm. 11.10]{logst}, \cite[Thm.3.1.4]{SzPhd}.

Let us now introduce a further auxiliary axiom about the possibility
of motion of inertial observers.
\begin{description}\label{axthexp}
\item[\underline{\ax{AxThExp}}] Inertial observers can move along any
  straight line with any speed less than the speed of light:
\begin{multline*}
\exists h \; \IOb(h)\land
\big(\IOb(m)\land \sqspace(\vx,\vy)<\timed(\vx,\vy)^2
  \\ \rightarrow \exists k \big[\IOb(k)\land \W(m,k,\vx)\land\W(m,k,\vy)\big]\big).
\end{multline*}
\end{description}

Theorem~\ref{thm-eof} below shows that axiom
\ax{AxThExp} implies that positive numbers have square roots if
\ax{SpecRel_d} is assumed.
\begin{thm}\label{thm-eof}If $d\ge3$, then
\begin{equation*} 
\num(\ax{SpecRel_d} + \ax{AxThExp})=\{\mathfrak{Q}:\mathfrak{Q}
\text{ is a Euclidean field} \}.
\end{equation*}
\end{thm}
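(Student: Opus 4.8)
The plan is to establish the two inclusions of the claimed equality separately. For the inclusion $\supseteq$ I would show that every Euclidean field is the quantity part of some model, and for $\subseteq$ that any model forces its quantity structure to be Euclidean. The direction $\supseteq$ is the routine one, while $\subseteq$ carries the real content and is where the hypothesis $d\ge3$ (hence Theorem~\ref{thm-poi}) is used.

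For $\supseteq$, given a Euclidean field $\mathfrak{Q}$ I would take the standard Minkowski model over $\mathfrak{Q}$, whose inertial observers are indexed by the Poincar\'e transformations of $\mathfrak{Q}^d$ and whose photons are the lightlike lines, with $\W$ defined so that each observer's worldview is the standard one composed with its Poincar\'e transformation. By Remark~\ref{rem-of} this already validates $\ax{SpecRel_d}$ over any ordered field, so only $\ax{AxThExp}$ remains to be checked. For a sublight velocity $v$ (that is, $\sqspace$ of the direction less than $\timed^2$), the required moving observer is produced by the Lorentz boost in direction $v$, whose matrix entries involve the normalizing factor $1/\sqrt{1-|v|^2}$; this factor lies in $\mathfrak{Q}$ precisely because $\mathfrak{Q}$ is Euclidean. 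Hence a moving observer exists through any pair of events $\vx,\vy$ with $\sqspace(\vx,\vy)<\timed(\vx,\vy)^2$, and $\ax{AxThExp}$ holds.

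For $\subseteq$, assume a model over $\mathfrak{Q}$. By $\ax{AxOField}$ the structure $\mathfrak{Q}$ is already an ordered field, so it suffices to produce a square root of an arbitrary $a>0$. Fix an inertial observer $m$ (one exists by $\ax{AxThExp}$) and set $t:=(a-1)/(a+1)\in\mathfrak{Q}$, noting $t^2<1$ for every $a>0$. Choosing events with $\timed(\vx,\vy)=a+1$ and $\sqspace(\vx,\vy)=(a-1)^2$ (a single square) gives $\sqspace(\vx,\vy)<\timed(\vx,\vy)^2$, so by $\ax{AxThExp}$ there is an inertial $k$ with $\W(m,k,\vx)$ and $\W(m,k,\vy)$. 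Since $d\ge3$, Theorem~\ref{thm-poi} makes $\w_{km}$ a Poincar\'e transformation, whose linear part $M$ preserves the form $\timed^2-\sqspace$. By $\ax{AxSelf}$ the worldline of $k$ is the time axis in his own coordinates, so $M$ sends $\langle1,0,\dots,0\rangle$ to a vector $\lambda\langle1,v\rangle$ along $k$'s worldline in $m$'s coordinates, where $v$ is $k$'s velocity and $|v|^2=t^2$. As $M$ has entries in $\mathfrak{Q}$, the scalar $\lambda$ (the first coordinate of the image) lies in $\mathfrak{Q}$, and preservation of the form yields $\lambda^2(1-t^2)=1$, whence $\sqrt{1-t^2}=1/|\lambda|\in\mathfrak{Q}$. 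Since $1-t^2=4a/(a+1)^2$, we get $\sqrt{a}=\tfrac{a+1}{2}\sqrt{1-t^2}\in\mathfrak{Q}$, so $\mathfrak{Q}$ is Euclidean.

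The main obstacle lies in the direction $\subseteq$. A naive appeal to $\ax{AxThExp}$ only produces observers whose speed squared is realizable as a spatial separation, i.e.\ a sum of squares, and so only forces square roots of numbers of the form $1-s^2$ with $s^2$ a square less than $1$; this is a priori a proper subclass of the positive numbers. The device $t=(a-1)/(a+1)$ is exactly what converts such a root into a root of an arbitrary positive $a$, and is the step I expect to be the crux. A secondary point needing care is the passage from ``$\w_{km}$ is Poincar\'e'' to ``$\lambda\in\mathfrak{Q}$'': one must invoke $\ax{AxSelf}$ to identify $k$'s worldline with a coordinate axis and then read the normalizing scalar off the entries of the linear part, using that these entries and the velocity $v$ already lie in $\mathfrak{Q}$.
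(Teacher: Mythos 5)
Your proof is correct. Note that the paper itself does not prove Theorem~\ref{thm-eof}: it only states it and defers the proof to the reference \cite{wnst}, so there is no in-paper argument to compare against; what can be said is that your argument is the standard one and is consistent with the machinery the paper does exhibit. Your $\supseteq$ direction is essentially the paper's own Section~\ref{sec-proof} construction (observers $=$ Poincar\'e transformations, photons $=$ lines of slope $1$, worldviews pulled back along $m$), transplanted from $\rac$ to an arbitrary Euclidean field, where it validates the full axiom \ax{AxThExp} rather than only \ax{AxThExp^-} precisely because the boost entries $1/\sqrt{1-|v|^2}$, and the Gram--Schmidt data for a general direction, exist in a Euclidean field. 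Your $\subseteq$ direction is also sound: fixing an observer $m$ (the $\exists h\,\IOb(h)$ conjunct of \ax{AxThExp} supplies one), taking $\vy-\vx=\langle a+1,a-1,0,\dots,0\rangle$ so that $\sqspace(\vx,\vy)=(a-1)^2<(a+1)^2=\timed(\vx,\vy)^2$, invoking \ax{AxThExp} to get $k$, and using Theorem~\ref{thm-poi} (here $d\ge3$ enters) together with \ax{AxSelf} and \ax{AxEv} to identify $\wl_m(k)=\w_{km}[\taxis]$, you correctly conclude that the linear part sends $\vet$ to $\lambda\langle 1,t,0,\dots,0\rangle$ with $\lambda=M(\vet)_1\in\mathfrak{Q}$ and $\lambda^2(1-t^2)=1$, whence $\sqrt{a}=\tfrac{a+1}{2|\lambda|}\in\mathfrak{Q}$. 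The substitution $t=(a-1)/(a+1)$, which converts square roots of numbers of the form $1-t^2$ into square roots of arbitrary positive $a$, is exactly the crux you identify, and it is the same device used in the cited literature; your proof can stand as a self-contained replacement for the external reference.
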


\begin{rem}\label{rem-of}
Axiom \ax{AxThExp} cannot be omitted from Theorem~\ref{thm-eof} since
\ax{SpecRel_d} has a model over every ordered field, i.e., for all
$d\ge2$,
\begin{equation*}
\num(\ax{SpecRel_d})=\{\mathfrak{Q}:\mathfrak{Q}
\text{ is an ordered field} \} 
\end{equation*}
for all $d\ge2$.  Moreover, \ax{SpecRel_d} also has non trivial
 models in which there are several observers moving relative to each
 other. We conjecture that there is a model of \ax{SpecRel_d} over
 every ordered field such that the possible speeds of observers are
 dense in interval $[0,1]$, see Conjecture~\ref{conj-of} on
 p.\pageref{conj-of}.
\end{rem}

Since our measurements have only finite accuracy, it is natural to
assume \ax{AxThExp} only approximately. 
\begin{description}\label{axthexp-}
\item[\underline{\ax{AxThExp^-}}] Inertial observers can move roughly
  with any speed less than the speed of light roughly in any
  direction:
\begin{multline*}  
\exists h \; \IOb(h) \land \Big(\IOb(m)\land \varepsilon >0 \land
v_2^2+\ldots+v_d^2<1\land v_1=1 \\\rightarrow \exists \vw \Big[
  (w_1-v_1)^2+\ldots+(w_d-v_d)^2<\varepsilon \land \forall \vx\vy\, \exists
  \lambda\Big(\vx-\vy=\lambda \vw \\ \rightarrow \exists k\big[
    \IOb(m)\land \W(m,k,\vy)\land\W(m,k,\vy)\big]\Big)\Big]\Big).
\end{multline*}
\end{description}

By Theorem~\ref{thm-rac}, a model of \ax{SpecRel_d} + \ax{AxThExp^-}
has a model over the field of rational numbers in any dimension. We
use the notation $\mathfrak{Q}\in\num(\ax{Th})$ for algebraic
structure $\mathfrak{Q}$ the same way as the model theoretic notation
$\mathfrak{Q}\in Mod(\ax{AxField})$, e.g., $\rac\in\num(\ax{Th})$
means that $\rac$, the field of rational numbers, can be the structure
of quantities in theory \ax{Th}.

\begin{thm}\label{thm-rac} For all $d\ge2$,
\begin{equation*}
\rac\in\num(\ax{SpecRel_d} + \ax{AxThExp^-}).
\end{equation*}
\end{thm}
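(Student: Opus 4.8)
The goal is to construct an explicit model of $\ax{SpecRel_d}+\ax{AxThExp^-}$ whose quantity part is $\rac$. My plan is to build this model concretely rather than via a model-theoretic compactness or ultraproduct argument, since the target field $\rac$ is fixed and we need to exhibit it directly.

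First I would set the quantity part to be $\langle \rac,+,\cdot,\le\rangle$, which is an ordered field, so $\ax{AxOField}$ holds automatically. The substantive work is choosing the set of inertial observers and the worldview relation $\W$ so that the remaining axioms hold. The natural strategy is to start from the standard real model: pick a rich set $L$ of timelike lines through the origin (equivalently, velocities $\vv$ with $v_1=1$ and $v_2^2+\ldots+v_d^2<1$) whose associated Lorentz boosts can be realized over $\rac$, and let the inertial observers be those obtained from a fixed observer by applying these boosts composed with rational translations. Each $\w_{mk}$ should be a Poincar\'e transformation with rational coefficients. The key point is that for $\ax{AxThExp^-}$, unlike $\ax{AxThExp}$, I do \emph{not} need every timelike velocity to be attained: I only need the attainable velocities to be dense in the open unit ball (in the $v_1=1$ slice), so that for each target $\vv$ and each $\varepsilon>0$ some attained $\vw$ lies within $\varepsilon$. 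So I would exhibit a dense set of rational velocities whose Lorentz boosts are defined over $\rac$ — this is the arithmetic heart of the construction.

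The crucial lemma is therefore: the set of velocities $\vv$ for which the corresponding Lorentz transformation has all-rational matrix entries is dense in the unit ball. The obstruction is that the standard boost in direction $\vv$ with speed $|\vv|$ involves the Lorentz factor $1/\sqrt{1-|\vv|^2}$, which is typically irrational even for rational $\vv$. To get rationality I would exploit rational points on the relevant quadric: velocities of the form arising from rational points on the unit sphere / hyperboloid, using the classical fact (Pythagorean parametrization) that rational points are dense on these quadrics. Concretely, one obtains boosts with rational $\gamma$ exactly when $1-|\vv|^2$ is a square in $\rac$, and the set of such $\vv$ is dense because rational squares accumulate everywhere in $[0,1]$. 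Composing such boosts with rational rotations (again using density of rational orthogonal matrices, e.g.\ via Cayley transforms of rational skew-symmetric matrices) yields a dense set of admissible velocities in every spatial direction, which is precisely what the ``roughly any speed in roughly any direction'' phrasing of $\ax{AxThExp^-}$ demands.

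With such a group $G$ of rational Poincar\'e transformations in hand, I would define the observers to be indexed by $G$ (observer $g$ being the fixed observer transformed by $g$), and define $\W$ so that $\w_{mk}$ equals the intended element of $G$; light signals $\Ph$ are introduced as the bodies whose coordinate loci are the light cones $\sqspace(\vx,\vy)=\timed(\vx,\vy)^2$. Then I would verify the axioms in turn: $\ax{AxSelf}$ and $\ax{AxSymD}$ (speed of light $1$, agreement on simultaneous distances) follow because the transformations are Poincar\'e transformations fixing the Minkowski form and using unit light-speed; $\ax{AxPh}$ follows from light cones being preserved; $\ax{AxEv}$ follows from $G$ acting compatibly on events so that every observer sees the same event set; and $\ax{AxThExp^-}$ follows from the density lemma, since for any admissible target velocity and tolerance we can find $g\in G$ realizing a velocity within $\varepsilon$. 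I expect the density lemma for rational Lorentz boosts to be the main obstacle, as the rest is a careful but routine verification that the homogeneous action of a group of rational Poincar\'e transformations satisfies the five axioms of $\ax{SpecRel_d}$.
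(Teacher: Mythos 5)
Your construction is essentially the paper's: you take the quantity part to be $\rac$, let the inertial observers be (indexed by) the rational Poincar\'e transformations acting on a fixed base worldview so that each $\w_{mk}$ is the corresponding rational Poincar\'e map, and reduce everything to the arithmetic fact that rational points are dense on spheres. Where you genuinely diverge is in how the key density step is packaged. The paper proves a stronger statement (its Theorem~\ref{thm-lor}): every \emph{real} Poincar\'e transformation is approximable in operator norm by a rational one, built from Schmutz's theorem on rational orthogonal matrices, a lemma on rational Lorentz boosts, and a norm-of-composition estimate; it then converts operator-norm closeness into velocity closeness by an explicit computation with $L(\vet)/L(\vet)_1$. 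You instead aim directly at density of the \emph{achievable velocities}, observing that a boost with rational velocity $\vv$ has all-rational entries exactly when $1-|\vv|^2$ is a rational square, and that such $\vv$ are dense. This is a legitimate and somewhat more economical route, since $\ax{AxThExp^-}$ only quantifies over velocities, not over transformations; the paper's detour buys a reusable approximation theorem but is not needed for the axiom. One caution: your parenthetical justification ``rational squares accumulate everywhere in $[0,1]$'' does not by itself give density of $\{\vv : 1-|\vv|^2\in\rac^2\}$ --- the fact actually needed is density of rational points on the unit sphere (your Pythagorean-parametrization remark), which is exactly the paper's Theorem~\ref{thm-dc}; your composition with rational rotations then handles arbitrary directions, paralleling the paper's use of rational orthogonal maps.

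One slip needs fixing: you introduce light signals as ``bodies whose coordinate loci are the light cones $\sqspace(\vx,\vy)=\timed(\vx,\vy)^2$.'' Taken literally, a light signal whose worldline is a full cone violates $\ax{AxPh}$: two points on the same cone that are \emph{not} null-separated from each other would then be joined by a light signal, breaking the right-to-left direction of the biconditional. The worldline of a single light signal must be a line of slope $1$ (a null line), which is how the paper defines $\Ph$; with that repair your verification of $\ax{AxPh}$ (Poincar\'e transformations preserve lines of slope $1$) goes through as in the paper.
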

For the proof of Theorem~\ref{thm-rac}, see Section\ref{sec-proof}.

An ordered field is called {\bf Archimedean field} iff for all
$a$, there is a natural number $n$ such that
\begin{equation}
a<\underbrace{1+\ldots+1}_n
\end{equation}
holds.  By Pickert--Hion Theorem, every Archimedean field is
isomorphic to a subfield of the field of real numbers, see, e.g.,
\cite[\S VIII]{fuchs}, \cite[C.44.2]{CHA}. Consequently, the field of
rational numbers is dense in any Archimedean field since it is dense
in the field of real numbers. Therefore, the following is a corollary
of Theorem \ref{thm-rac}.
\begin{cor}\label{cor-arch} For all $d\ge2$,
\begin{equation*}
\{\mathfrak{Q}:\mathfrak{Q} \text{ is an Archimedean 
  field}\}\subsetneqq\num(\ax{SpecRel_d}\! +\!
\ax{AxThExp^-}).
\end{equation*}
\end{cor}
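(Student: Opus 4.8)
The plan is to prove the two directions of the strict inclusion separately: first that every Archimedean field lies in $\num(\ax{SpecRel_d}+\ax{AxThExp^-})$, and then that a non-Archimedean field does too, which forces the inclusion to be proper.

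For the inclusion, let $\mathfrak{Q}$ be an Archimedean field. By the Pickert--Hion Theorem quoted above I may assume $\mathfrak{Q}$ is a subfield of $\R$ containing $\rac$, so that $\rac$ is dense in $\mathfrak{Q}$. I would then return to the model of $\ax{SpecRel_d}+\ax{AxThExp^-}$ over $\rac$ supplied by Theorem~\ref{thm-rac} and use the fact that its inertial observers are indexed by Poincar\'e transformations with \emph{rational} coefficients whose velocities are dense in the open unit ball. Since these transformations have rational entries, the same family acts on $\mathfrak{Q}^d$, and interpreting the model over $\mathfrak{Q}$ preserves all of $\ax{SpecRel_d}$: the Minkowski-form identity defining a Poincar\'e transformation and the light-cone form of $\ax{AxPh}$ are field-independent, and $\ax{AxOField}$ holds because $\mathfrak{Q}$ is an ordered field. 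Only $\ax{AxThExp^-}$ uses the hypothesis: given a target velocity $\vv$ in the $\mathfrak{Q}$-unit ball and an $\varepsilon>0$ in $\mathfrak{Q}$, the density of $\rac$ in $\mathfrak{Q}$ provides a rational velocity within $\varepsilon$ of $\vv$, whose observer is already present. Hence $\mathfrak{Q}\in\num(\ax{SpecRel_d}+\ax{AxThExp^-})$.

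For strictness I would argue model-theoretically, because the density trick above collapses precisely when $\mathfrak{Q}$ is non-Archimedean. Every axiom of $\ax{SpecRel_d}+\ax{AxThExp^-}$ is a first-order sentence, so taking the model $\mathfrak{M}$ over $\rac$ from Theorem~\ref{thm-rac} and forming its ultrapower $\mathfrak{M}^{\mathbb{N}}/\mathcal{U}$ by a nonprincipal ultrafilter $\mathcal{U}$ on $\mathbb{N}$, \L{}o\'s's Theorem guarantees that the ultrapower still models $\ax{SpecRel_d}+\ax{AxThExp^-}$. In two-sorted logic the quantity sort of $\mathfrak{M}^{\mathbb{N}}/\mathcal{U}$ is the ordered-field ultrapower $\rac^{\mathbb{N}}/\mathcal{U}$, which is non-Archimedean, since the class of $\langle 1,2,3,\dots\rangle$ exceeds every $\underbrace{1+\dots+1}_n$. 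This exhibits a non-Archimedean field in $\num(\ax{SpecRel_d}+\ax{AxThExp^-})$ and so makes the inclusion proper. Here $\ax{AxThExp^-}$ survives for free through \L{}o\'s's Theorem, the nonstandard observers supplying the velocities that the sparse rational ones cannot approximate, which is exactly why this argument succeeds although the direct construction would not.

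The real content sits in the first direction, and it is inherited from Theorem~\ref{thm-rac}: the extension argument is only as strong as the guarantee that the $\rac$-model is built from rationally defined Poincar\'e transformations with a dense set of velocities. If one prefers not to inspect that construction, the cleanest formulation is to note that the proof of Theorem~\ref{thm-rac} uses nothing about $\rac$ beyond its density in the quantity structure, so it applies verbatim over any ordered field in which $\rac$ is dense, and every Archimedean field is of this kind.
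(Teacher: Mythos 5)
Your primary argument for the inclusion has a genuine gap. You keep the bodies of the rational model --- Poincar\'e transformations and slope-1 lines with \emph{rational} data --- and only enlarge the quantity sort to $\mathfrak{Q}$. That structure does not satisfy the axioms, and the failure is not confined to the approximation step you flag. Take $d=2$ and $\mathfrak{Q}=\rac(\sqrt2)$, and consider $\vx=\langle 0,\sqrt2\rangle$. If a line with rational base point and rational direction contains $\vx$, then writing its parameter as $a+b\sqrt2$ and separating rational and irrational parts forces the direction to be proportional to $\langle 0,1\rangle$; hence \emph{no} rational lightlike line and \emph{no} rational timelike line passes through $\vx$. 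Consequently \ax{AxPh} already fails for the observer $\Id$: the points $\vx$ and $\vy=\langle 1,1+\sqrt2\rangle$ are null-separated, yet no light signal of the old family contains both, so the existential direction of \ax{AxPh} is not ``field-independent'' as you claim. For the same reason \ax{AxThExp^-} fails: that axiom does not merely ask for an observer whose velocity is $\varepsilon$-close to $\vv$; it asks that, for the chosen $\vw$, \emph{every} pair $\vx,\vy\in\mathfrak{Q}^d$ with $\vy-\vx=\lambda\vw$ lie on some observer's worldline, and no worldline of the rational family passes through $\langle 0,\sqrt2\rangle$ at all. So ``whose observer is already present'' is precisely the step that breaks: the observer must hit the prescribed $\mathfrak{Q}$-points, not just have the right velocity.

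The correct route is the one you relegate to your closing paragraph, and it is the paper's argument: embed $\mathfrak{Q}$ into $\R$ by Pickert--Hion, take as inertial observers \emph{all} Poincar\'e transformations of $\mathfrak{Q}^d$ (and as light signals all slope-1 lines of $\mathfrak{Q}^d$), and rerun the proof of Theorem~\ref{thm-rac}. The witnesses constructed there, $P^*=L^*+\vx$ with $L^*$ a rational Lorentz transformation supplied by Theorem~\ref{thm-lor} and $\vx\in\mathfrak{Q}^d$, are Poincar\'e transformations of $\mathfrak{Q}^d$, so the enlarged family does contain observers through every required pair of points; what drives the proof is Theorem~\ref{thm-lor} together with $\rac\subseteq\mathfrak{Q}\subseteq\R$, not bare density of $\rac$ in $\mathfrak{Q}$. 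Your strictness argument, on the other hand, is correct: a nonprincipal ultrapower of the rational model still satisfies \ax{SpecRel_d} $+$ \ax{AxThExp^-} by {\L}o\'s's theorem, and its quantity field $\rac^{\mathbb{N}}/\mathcal{U}$ is non-Archimedean. This is a valid alternative to the paper's justification, which obtains properness from upward L\"owenheim--Skolem (quantity fields of cardinality above the continuum cannot be Archimedean); both are routine model-theoretic observations of comparable strength.
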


The question ``exactly which ordered fields can be the quantity
structures of theory \ax{SpecRel_d} + \ax{AxThExp^-}?'' is open. By
L\"ovenheim--Skolem Theorem it is clear that $\num(\ax{SpecRel_d}\!
+\! \ax{AxThExp^-})$ cannot be the class of Archimedean fields since
it has elements of arbitrarily large cardinality while an Archimedean
field has at most the cardinality of continuum since Archimedean
fields are subsets of the field of real numbers by Pickert--Hion
Theorem. We conjecture that there is a model of \ax{SpecRel_d} +
\ax{AxThExp^-} over every ordered field in any dimension, i.e.:
\begin{conj}\label{conj-of} For all $d\ge2$,
\begin{equation*}
\num(\ax{SpecRel_d} + \ax{AxThExp^-})=\{\mathfrak{Q}:\mathfrak{Q}
\text{ is an ordered field}\}.
\end{equation*}
\end{conj}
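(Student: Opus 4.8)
The plan is to prove the two inclusions separately. The inclusion from left to right is immediate: $\ax{AxOField}$ is one of the conjuncts of $\ax{SpecRel_d}$, so the quantity part of every model of $\ax{SpecRel_d}+\ax{AxThExp^-}$ is by definition an ordered field. The content of the conjecture is the reverse inclusion, that \emph{every} ordered field $\mathfrak{Q}$ occurs as the quantity part of some model. Here the natural strategy is to generalize the construction behind Theorem~\ref{thm-rac}, which already settles the case $\mathfrak{Q}=\rac$, and to isolate exactly the feature of $\rac$ that the construction uses.

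I would build the model with its worldview transformations as the primary data, in the style underlying Remark~\ref{rem-of} and Theorem~\ref{thm-poi}. Fix a reference frame over $\mathfrak{Q}^d$, let the inertial observers be indexed by a group $G$ of Poincar\'e transformations over $\mathfrak{Q}$ that fix the light cone (so the speed of light is $1$), let the light signals trace the light cones, and set $\w_{mk}$ to be the element of $G$ relating $m$ to $k$. For any such group $G$, the axioms $\ax{AxPh}$, $\ax{AxEv}$, $\ax{AxSelf}$ and $\ax{AxSymD}$ hold. The only axiom that constrains $G$ seriously is $\ax{AxThExp^-}$: it demands that for every timelike target direction $\vv$ with $v_1=1$ and $v_2^2+\ldots+v_d^2<1$ and every $\varepsilon>0$ there be a nearby direction $\vw$, within $\varepsilon$ of $\vv$, that is the worldline direction of some observer in $G$. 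Equivalently, $G$ must contain boosts whose spatial velocities are dense in the open unit ball, matching the density of speeds in $[0,1]$ anticipated in Remark~\ref{rem-of}.

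Everything therefore reduces to an algebraic density problem over $\mathfrak{Q}$. A boost with spatial velocity $\vw$ has all its matrix entries in $\mathfrak{Q}$ precisely when its Lorentz factor $\bigl(1-(w_2^2+\ldots+w_d^2)\bigr)^{-1/2}$ lies in $\mathfrak{Q}$, i.e.\ when $1-(w_2^2+\ldots+w_d^2)$ is a square in $\mathfrak{Q}$. Writing $s$ for its square root, the admissible velocities are thus the projections of the $\mathfrak{Q}$-points of the unit sphere $w_2^2+\ldots+w_d^2+s^2=1$. Over $\rac$, and more generally over any Archimedean field (this is what Corollary~\ref{cor-arch} exploits), these points are dense in the open unit ball by the classical stereographic parametrization of the rational points of a sphere; this density is exactly what makes $\ax{AxThExp^-}$ hold, while the exact axiom $\ax{AxThExp}$ would demand \emph{all} velocities and hence, by Theorem~\ref{thm-eof}, force $\mathfrak{Q}$ to be Euclidean.

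The hard part, and the reason the statement is posed as a conjecture rather than a theorem, is to secure this density over an \emph{arbitrary}, possibly non-Archimedean, ordered field. When $\mathfrak{Q}$ has infinitesimals the stereographic points need not be order-dense: between two field elements differing by an infinitesimal there may be no admissible velocity at all. Two attacks seem worth pursuing. The first is algebraic: exploit that $G$ is closed under composition, so that products of admissible boosts remain in $G$, and try to show that the velocities they realize, governed by the relativistic velocity-addition law, are order-dense in each coordinate direction of the open unit ball. The second is model-theoretic: establish the conjecture first for a real-closed (hence sufficiently square-root-rich) extension of $\mathfrak{Q}$ and then attempt to descend, or build a suitably saturated model and transfer it back to $\mathfrak{Q}$ by a compactness or L\"owenheim--Skolem argument. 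Turning either idea into genuine order-density of admissible velocities over every ordered field is the principal obstacle I anticipate.
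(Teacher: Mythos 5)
You have correctly identified what this statement is: the paper itself offers no proof of it, posing it explicitly as an open conjecture (only the trivial left-to-right inclusion via \ax{AxOField} and the special cases settled by Theorem~\ref{thm-rac} and Corollary~\ref{cor-arch} are established). Your proposal is consistent with the paper's machinery: the model skeleton you describe is exactly the construction in the proof of Theorem~\ref{thm-rac} generalized from $\rac$ to $\mathfrak{Q}$, and your reduction of the hard direction to a density question is sound and, for $d\ge3$, essentially forced --- by Theorem~\ref{thm-poi} worldview transformations in any model must be Poincar\'e transformations over $\mathfrak{Q}$, so realizable observer velocities $\vw$ are precisely those with $1-(w_2^2+\ldots+w_d^2)$ a square in $\mathfrak{Q}$, and \ax{AxThExp^-} holds iff these are order-dense in the unit ball. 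But your proposal does not close this reduction, and you say so yourself; since that density over arbitrary non-Archimedean ordered fields is the entire content of the conjecture, what you have is a correct problem analysis and research plan, not a proof.

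Two concrete cautions about your proposed attacks. The velocity-addition route is plausible in spirit (composites of admissible boosts are admissible, since the group of $\mathfrak{Q}$-entried Lorentz transformations is closed under composition), but density of the realized velocities in the \emph{order topology} of a non-Archimedean field is exactly where it stalls: in, say, $\rac(x)$ with $x$ infinitesimal, squares are sparse, and nothing you have written shows an admissible velocity within an infinitesimal $\varepsilon$ of a given $\vv$; the Archimedean argument via Theorem~\ref{thm-dc} uses density of $\rac$ in $\R$ and has no analogue here. The model-theoretic route as sketched would fail outright: passing to a real-closed extension changes the quantity sort, and L\"owenheim--Skolem or compactness controls cardinality and elementary equivalence of the two-sorted model as a whole but cannot pin the quantity part to a \emph{prescribed} ordered field $\mathfrak{Q}$ --- membership of a specific $\mathfrak{Q}$ in $\num(\ax{Th})$ is not preserved by these transfer tools, which is precisely why the paper's Corollary~\ref{cor-arch} argument goes through density of $\rac$ in Archimedean fields rather than through any such descent.
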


\section{Proof of Theorem~\ref{thm-rac}}
\label{sec-proof}
In this section, we are going to prove our main result. To do so, let
us recall some concepts and theorems from the literature. The
following theorem is well-known, see, e.g., \cite[Thm.2.1]{schmutz}.
\begin{thm} \label{thm-dc}
The unit sphere of $\R^n$ has a dense set of points with rational coordinates.
\end{thm}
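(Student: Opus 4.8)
The plan is to use stereographic projection, whose defining feature is that it is a rational homeomorphism between the punctured sphere and a Euclidean hyperplane, and hence carries rational points to rational points. Write the unit sphere as $S=\{x\in\R^n : x_1^2+\ldots+x_n^2=1\}$ and fix the north pole $N=(0,\ldots,0,1)$. For $n=1$ the sphere is just $\{-1,1\}$ and the claim is trivial, so I assume $n\ge 2$ throughout.

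First I would write down stereographic projection $\sigma\colon S\setminus\{N\}\to\R^{n-1}$ from $N$ onto the hyperplane $x_n=0$, namely
\[\sigma(x_1,\ldots,x_n)=\frac{1}{1-x_n}(x_1,\ldots,x_{n-1}),\]
together with its inverse, which for $y\in\R^{n-1}$ and $s:=y_1^2+\ldots+y_{n-1}^2$ reads
\[\sigma^{-1}(y)=\Big(\frac{2y_1}{s+1},\ldots,\frac{2y_{n-1}}{s+1},\frac{s-1}{s+1}\Big).\]
A direct computation confirms that $\sigma^{-1}(y)$ lies on $S$ and that $\sigma$ and $\sigma^{-1}$ are mutually inverse. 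The key observation is that both maps are quotients of polynomials with integer coefficients whose denominators ($1-x_n$, respectively $s+1$) never vanish on the relevant domain --- the latter because $s\ge 0$. Consequently $\sigma^{-1}$ sends every point of $\rac^{n-1}$ to a point of $S$ all of whose coordinates are rational (and, symmetrically, $\sigma$ sends rational points of $S$ into $\rac^{n-1}$).

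The density argument then runs as follows. Since $\rac^{n-1}$ is dense in $\R^{n-1}$ and $\sigma^{-1}$ is continuous with image $S\setminus\{N\}$, the set $\sigma^{-1}(\rac^{n-1})$ is dense in $S\setminus\{N\}$; by the previous paragraph each of its points has rational coordinates. Finally, $S\setminus\{N\}$ is dense in $S$ because, for $n\ge 2$, the removed point $N$ is a limit of points of $S\setminus\{N\}$. Composing these two density facts shows that the points of $S$ with rational coordinates are dense in $S$.

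I expect the only genuine content to be the explicit inverse formula and the verification that it lands on the sphere; the rest is formal --- continuity of $\sigma^{-1}$, density of $\rac^{n-1}$ in $\R^{n-1}$, and the triviality that a map defined by ratios of integer polynomials sends rational inputs to rational outputs. The conceptual heart, and the point worth stating carefully, is thus that stereographic projection is defined over $\rac$ in both directions, so that rationality of coordinates is preserved; the topological fact that the punctured sphere is dense in the sphere is routine.
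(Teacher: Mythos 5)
Your proof is correct, but note that the paper itself offers no proof of this statement at all: it is quoted as a well-known fact with a pointer to the literature (Schmutz, Thm.~2.1), so there is no internal argument to compare yours against. Your stereographic-projection proof is the standard self-contained argument and it is complete: the inverse formula
\[
\sigma^{-1}(y)=\Bigl(\tfrac{2y_1}{s+1},\ldots,\tfrac{2y_{n-1}}{s+1},\tfrac{s-1}{s+1}\Bigr),\qquad s=y_1^2+\cdots+y_{n-1}^2,
\]
does land on the sphere, both $\sigma$ and $\sigma^{-1}$ are rational maps over $\rac$ with nonvanishing denominators, and the two topological steps you invoke (a continuous map sends a dense subset of its domain to a dense subset of its image, and the punctured sphere is dense in the sphere for $n\ge2$) are exactly right; you also correctly dispose of the degenerate case $n=1$. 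One point worth being aware of: the cited result of Schmutz is quantitative (rational approximations on the sphere with explicit control on the size of denominators), which is strictly stronger than the qualitative density statement; for the purposes of this paper --- where the theorem is only used to approximate Lorentz boosts by rational ones (Lemma~\ref{lem-b}) --- the qualitative density your argument provides is all that is needed, so your proof would serve as a legitimate drop-in replacement for the citation.
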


The {\bf Euclidean length} of $\vx\in\Q^n$ if $n\ge 1$ is defined as:
\begin{equation}
|\vx|\de\sqrt{x_1^2+\dots+x_n^2}.
\end{equation}

Let us recall that the {\bf norm} of linear map
$A:\R^d\rightarrow\R^d$, in symbols $||A||$, is defined as follows:
\begin{equation}
||A||\de\max\{ |A\vx|: \vx\in\R^d \text{ and } |\vx|=1\}.
\end{equation}
Linear bijection $A$ is called {\bf orthogonal transformation} if it
preserves the Euclidean distance.

Theorem~\ref{thm-dc} implies Theorem~\ref{thm-dot}, see
\cite[Thm.3.1]{schmutz}.
\begin{thm} \label{thm-dot}
For all orthogonal transformation $T:\R^n\rightarrow\R^n$ and any
$\varepsilon>0$, there is a orthogonal transformaion
$A:\rac^n\rightarrow\rac^n$ such that $||T-A||<\varepsilon$.
\end{thm}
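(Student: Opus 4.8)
The plan is to realize the orthogonal transformation $T$ as a composition of reflections via the Cartan--Dieudonn\'e theorem, to replace the normal vector of each reflection by a nearby \emph{rational} unit vector using Theorem~\ref{thm-dc}, and then to control how the individual approximation errors accumulate under composition. First I would recall that for any unit vector $\vv\in\R^n$ the reflection $R_{\vv}(\vx)\de\vx-2\langle\vx,\vv\rangle\vv$ is an orthogonal transformation whose matrix is $I-2\vv\vv^{\top}$. The crucial observation is that if $\vv$ has rational coordinates, then the entries of $I-2\vv\vv^{\top}$ are rational, so $R_{\vv}$ is a rational orthogonal transformation mapping $\rac^n$ into $\rac^n$; and a product of such maps is again a rational orthogonal transformation of $\rac^n$.

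By the well-known Cartan--Dieudonn\'e theorem, every $T\in O(n)$ can be written as $T=R_{\vv_1}\circ\dots\circ R_{\vv_k}$ for suitable unit vectors $\vv_1,\dots,\vv_k\in\R^n$ with $k\le n$ (the identity being the empty product). Using Theorem~\ref{thm-dc}, which asserts that the rational points are dense on the unit sphere of $\R^n$, for each $i$ I would choose a rational unit vector $\vu_i$ with $|\vv_i-\vu_i|$ as small as we please, and set $A\de R_{\vu_1}\circ\dots\circ R_{\vu_k}$. By the previous paragraph, $A$ is a rational orthogonal transformation, so $A:\rac^n\rightarrow\rac^n$, exactly as required. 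It then remains only to verify the estimate $||T-A||<\varepsilon$.

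The hard part, and the only place where genuine care is needed, is bounding the error of the composition so that it does not blow up with $k$. This is where the orthogonality is used decisively, since each $R_{\vv_i}$ and $R_{\vu_i}$ has norm $1$. Concretely, I would use the telescoping identity
\begin{equation*}
T-A=\sum_{i=1}^{k} R_{\vv_1}\cdots R_{\vv_{i-1}}\,(R_{\vv_i}-R_{\vu_i})\,R_{\vu_{i+1}}\cdots R_{\vu_k},
\end{equation*}
and, since all the factors flanking $R_{\vv_i}-R_{\vu_i}$ have norm $1$, the submultiplicativity of the operator norm yields
\begin{equation*}
||T-A||\le\sum_{i=1}^{k}||R_{\vv_i}-R_{\vu_i}||.
\end{equation*}
For each summand I would use $R_{\vu_i}-R_{\vv_i}=-2\big(\vu_i(\vu_i-\vv_i)^{\top}+(\vu_i-\vv_i)\vv_i^{\top}\big)$ together with $||\vu_i||=||\vv_i||=1$ to get $||R_{\vv_i}-R_{\vu_i}||\le 4|\vv_i-\vu_i|$. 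Choosing each $\vu_i$ so that $|\vv_i-\vu_i|<\varepsilon/(4n)$ then gives $||T-A||<\varepsilon$, completing the argument. Thus the entire proof reduces to the density statement of Theorem~\ref{thm-dc} plus the elementary, norm-$1$ telescoping bound, and no further input is required.
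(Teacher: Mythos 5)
Your proof is correct, and it fills a gap rather than duplicates: the paper gives no proof of this theorem at all, saying only that Theorem~\ref{thm-dc} implies it and deferring to the literature (\cite[Thm.3.1]{schmutz}). Your argument is precisely a realization of that implication: you write $T$ as a product of at most $n$ reflections via Cartan--Dieudonn\'e, observe that a reflection $R_{\vv}$ with rational unit normal $\vv$ has rational matrix $I-2\vv\vv^{\top}$ and hence restricts to an orthogonal bijection of $\rac^n$, use Theorem~\ref{thm-dc} to pick rational unit normals $\vu_i$ close to the $\vv_i$, and control the composition error. The estimates all check out: the telescoping identity is valid, the flanking factors are orthogonal and hence of norm one, the rank-one bound gives $||R_{\vv_i}-R_{\vu_i}||\le 2\big(||\vu_i(\vu_i-\vv_i)^{\top}||+||(\vu_i-\vv_i)\vv_i^{\top}||\big)\le 4|\vv_i-\vu_i|$, and with $k\le n$ and $|\vv_i-\vu_i|<\varepsilon/(4n)$ you get $||T-A||<\varepsilon$; the identity ($k=0$) case is trivially covered. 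This reflection-based argument is the standard route (and essentially the one in the cited reference), so your proposal can serve as a self-contained replacement for the citation, with the only external inputs being Theorem~\ref{thm-dc} and Cartan--Dieudonn\'e.
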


Using Theorem~\ref{thm-dot}, let us prove that its statement also
holds for Poincar\'e transformations.
\begin{thm}\label{thm-lor}
For every Poincar\'e transformation $L:\R^d\rightarrow\R^d$ and
positive real number $\varepsilon$, there is a Poincar\'e
transformation $L^*:\rac^d\rightarrow\rac^d$ such that
$||L-L^*||<\varepsilon$.
\end{thm}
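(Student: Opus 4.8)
The plan is to reduce the approximation of a Poincar\'e transformation to the already-available approximation of orthogonal transformations (Theorem~\ref{thm-dot}) together with an elementary approximation of boosts. Write $L(\vx)=\Lambda\vx+\vu$, where $\Lambda:\R^d\to\R^d$ is the linear part (a \emph{Lorentz transformation}, i.e. a linear bijection preserving the Minkowski form $\timed(\vx,\vy)^2-\sqspace(\vx,\vy)$, which is exactly what the linear part of a Poincar\'e transformation must be) and $\vu\in\R^d$ is a translation vector. Since $\rac$ is dense in $\R$, the translation part is harmless: pick $\vu^*\in\rac^d$ with $|\vu-\vu^*|$ as small as we like. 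The real content is to find a rational Lorentz transformation $\Lambda^*:\rac^d\to\rac^d$ with $||\Lambda-\Lambda^*||$ small; then $L^*(\vx):=\Lambda^*\vx+\vu^*$ will be the desired rational Poincar\'e transformation, because any affine bijection of $\rac^d$ whose linear part preserves the Minkowski form satisfies the defining property of a Poincar\'e transformation.

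First I would decompose $\Lambda$ using the Cartan ($KAK$) decomposition of the Lorentz group: $\Lambda=R_1\,B\,R_2$, where $R_1,R_2$ are orthogonal transformations fixing the time axis (spatial rotations, possibly composed with the reflections $\mathrm{diag}(-1,1,\dots,1)$ or $\mathrm{diag}(1,-1,\dots,1)$), and $B$ is a boost along the $x_2$-axis, acting as the identity on the last $d-2$ coordinates and as
\[ \begin{pmatrix} \cosh\phi & \sinh\phi \\ \sinh\phi & \cosh\phi \end{pmatrix} \]
on the $(x_1,x_2)$-plane. Each factor is rationally approximable. For $R_1$ and $R_2$, the spatial block is an orthogonal transformation of $\R^{d-1}$, which Theorem~\ref{thm-dot} approximates by a rational orthogonal transformation, while the time-axis reflections are already rational; this yields rational orthogonal $R_1^*,R_2^*$ with $||R_i-R_i^*||$ small. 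For $B$, it suffices to approximate the pair $(\cosh\phi,\sinh\phi)$ by a rational point on the hyperbola $a^2-b^2=1$; such points are dense, e.g. via $a=(s+s^{-1})/2$ and $b=(s^{-1}-s)/2$ for $s\in\rac$, $s>0$. This produces a rational boost $B^*$, which automatically preserves the Minkowski form since $a^2-b^2=1$ holds exactly, with $||B-B^*||$ small.

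Finally I would set $\Lambda^*:=R_1^*B^*R_2^*$ and verify two points. Rationality and the Poincar\'e property: each of $R_1^*,B^*,R_2^*$ is a rational linear bijection of $\rac^d$ preserving the Minkowski form, hence so is their composite; rationality of the inverses is automatic, since the inverse of a rational orthogonal transformation is its (rational) transpose and the inverse of $B^*$ is obtained by negating $b$. The error estimate: by a telescoping bound, $||\Lambda-\Lambda^*||\le ||R_1-R_1^*||\,||B||\,||R_2|| + ||R_1^*||\,||B-B^*||\,||R_2|| + ||R_1^*||\,||B^*||\,||R_2-R_2^*||$, and since the orthogonal factors have norm $1$ and $||B||,||B^*||$ are bounded in terms of the fixed map $L$, choosing the three approximations fine enough makes $||\Lambda-\Lambda^*||$ and $|\vu-\vu^*|$, and hence $||L-L^*||$, smaller than $\varepsilon$. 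The main obstacle is not any single step but the bookkeeping of this error propagation together with a careful, field-valid justification of the $KAK$ decomposition; the decomposition is precisely what lets us trade the unbounded, non-compact boost directions for the compact (orthogonal) part already controlled by Theorem~\ref{thm-dot}, plus a single one-parameter family of boosts carrying a dense set of rational points.
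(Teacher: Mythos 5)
Your proposal is correct and follows essentially the same route as the paper's own proof: decompose $L$ into a translation, two orthogonal factors and a standard boost (the paper's stated plan before Lemma~\ref{lem-b}), approximate the orthogonal factors via Theorem~\ref{thm-dot}, approximate the boost by a rational point on the conic $a^2-b^2=1$ (equivalent to the paper's Lemma~\ref{lem-b}, which uses rational points on the unit circle so that $\sqrt{1-w^2}\in\rac$), and propagate the errors through the composition (the paper's Lemma~\ref{lem-comp}). Your implementation details differ only cosmetically, and your explicit care that the orthogonal factors be approximated blockwise on the spatial coordinates---so that the approximants remain Lorentz---is a point the paper's two-line proof leaves implicit.
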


We are going to prove Theorem~\ref{thm-lor} by using the fact that
every Poincar\'e transformation is a composition of a Lorentz boost
and two orthogonal transformations. {\bf Lorentz boost} corresponding
to velocity $v\in[0,1)$, in symbols $B_v$, is defined as the following
  linear map:
\begin{equation}
B_v\vx=\left\langle
\frac{x_1-vx_2}{\sqrt{1-v^2}},\frac{x_2-vx_1}{\sqrt{1-v^2}},x_3,\ldots,x_d\right\rangle\quad\text{for
  all } \vx\in\Q^d.
\end{equation}

\begin{lem}\label{lem-b}
For all Lorentz boost $B_v:\R^d\rightarrow\R^d$ and positive number
$\varepsilon$, there is a Lorentz boost $B_w:\rac^d\rightarrow\rac^d$
such that $||B_v-B_w||<\varepsilon$.
\end{lem}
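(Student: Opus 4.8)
The plan is to exploit the fact that a Lorentz boost $B_v$ depends only on the single real parameter $v$ and that the assignment $v\mapsto B_v$ is continuous in the operator norm; so it will suffice to produce a rational velocity $w$ close to $v$ whose boost $B_w$ has \emph{rational} matrix entries, and then invoke continuity to convert closeness of velocities into closeness of boosts.

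First I would record exactly which boosts preserve $\rac^d$. Since $B_w$ acts as the identity on the coordinates $x_3,\dots,x_d$, it maps $\rac^d$ into $\rac^d$ precisely when the two numbers $\tfrac{1}{\sqrt{1-w^2}}$ and $\tfrac{w}{\sqrt{1-w^2}}$ are both rational. The crucial step is to find such $w$ densely in $[0,1)$, and here I would use the rational parametrisation of the hyperbola $a^2-b^2=1$ (the hyperbolic analogue of the circle parametrisation that produces Pythagorean triples): for every rational $t\in(0,1]$ put $w\de \tfrac{1-t^2}{1+t^2}$. A direct computation gives $\sqrt{1-w^2}=\tfrac{2t}{1+t^2}$, whence $\tfrac{1}{\sqrt{1-w^2}}=\tfrac{1+t^2}{2t}$ and $\tfrac{w}{\sqrt{1-w^2}}=\tfrac{1-t^2}{2t}$ are both rational, so $B_w$ is a genuine rational Lorentz boost. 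As $t$ ranges over $\rac\cap(0,1]$ the values $w=\tfrac{1-t^2}{1+t^2}$ range over a dense subset of $[0,1)$, because $t\mapsto\tfrac{1-t^2}{1+t^2}$ is continuous and maps $(0,1]$ onto $[0,1)$; in particular these admissible $w$ stay inside the interval $[0,1)$ on which boosts are defined.

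It then remains to turn closeness of velocities into closeness of boosts. Given $v\in[0,1)$ and $\varepsilon>0$, I would use the continuity of $w\mapsto\tfrac{1}{\sqrt{1-w^2}}$ and $w\mapsto\tfrac{w}{\sqrt{1-w^2}}$ on $[0,1)$ to choose $\delta>0$ so that $|w-v|<\delta$ keeps each of these two entries of $B_w$ within $\varepsilon/2$ of the corresponding entry of $B_v$. Picking a rational $w$ of the above form with $|w-v|<\delta$, the difference $B_v-B_w$ vanishes on the last $d-2$ coordinates, so its operator norm equals that of its symmetric $2\times2$ block $\left(\begin{smallmatrix}\alpha&\beta\\ \beta&\alpha\end{smallmatrix}\right)$, whose eigenvalues are $\alpha\pm\beta$ with $\alpha,\beta$ the two entry differences; hence $||B_v-B_w||\le|\alpha|+|\beta|<\varepsilon$, as required.

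The only genuinely nontrivial point is the density step: one must know that the rational boosts are not isolated but fill $[0,1)$ densely, which the hyperbola parametrisation settles cleanly. Everything after that is the routine continuity of $v\mapsto B_v$ together with the two-dimensional norm estimate.
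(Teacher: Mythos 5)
Your proof is correct and shares the paper's overall strategy---approximate $v$ by a velocity $w$ whose boost preserves $\rac^d$, then convert $|v-w|$ small into $||B_v-B_w||$ small via continuity of the two nontrivial matrix entries---but you discharge both sub-steps differently, and in each case more self-containedly. For the density of admissible velocities, the paper invokes Theorem~\ref{thm-dc} (density of rational points on the unit circle, taken from the literature): a rational point $(w,\sqrt{1-w^2})$ near $(v,\sqrt{1-v^2})$ yields a boost preserving $\rac^d$. You instead construct such $w$ explicitly as $w=(1-t^2)/(1+t^2)$ with $t\in\rac\cap(0,1]$, which is essentially the standard proof of Theorem~\ref{thm-dc} in dimension two, so no citation is needed; the only nitpick is that density of the image in $[0,1)$ needs slightly more than ``continuous and onto''---what saves you is that $t\mapsto(1-t^2)/(1+t^2)$ is strictly decreasing, hence a homeomorphism of $(0,1]$ onto $[0,1)$, so it carries the dense set of rational $t$'s to a dense set of velocities. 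For the norm estimate, the paper appeals to the equivalence of norms on finite-dimensional spaces and bounds the Euclidean (Frobenius) norm of $B_v-B_w$; you compute the operator norm exactly from the symmetric $2\times2$ block, getting $||B_v-B_w||=\max(|\alpha+\beta|,|\alpha-\beta|)\le|\alpha|+|\beta|<\varepsilon$, which is sharper and again avoids an external fact. The trade-off: your argument is entirely elementary and produces an explicit rational $w$, while the paper's is shorter because it reuses Theorem~\ref{thm-dc}, which it needs anyway for Theorem~\ref{thm-dot}.
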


\begin{proof}
Since, by Theorem~\ref{thm-dc}, the set of rational points are dense
in the unit circle, we have that, for all $\delta>0$ and $v\in[0,1)$,
  there is a $w\in \rac\cup [0,1)$ such that $|v-w|<\delta$ and
    $\sqrt{1-w^2}\in\rac$, i.e., $B_w$ takes rational point to
    rational ones. So we have to show that $||B_v- B_w||<\varepsilon$
    if $\delta$ is small enough. Since in a finite-dimensional vector
    space all norms are equivalent, see \cite[\S 8.5]{BL}, it is
    enough to show that the norm of $B_v-B_w$ can be less
    than any positive real number according to the Euclidean norm,
    which is

\begin{equation}
\sqrt{2\left|\frac{1}{\sqrt{1-v^2}} -\frac{1}{\sqrt{1-w^2}}\right|^2 
+2\left|\frac{v}{\sqrt{1-v^2}} -\frac{w}{\sqrt{1-w^2}}\right|^2}.
\end{equation}
By the continuity of functions $v\mapsto (1-v^2)^{-\frac{1}{2}}$ and
$v\mapsto v(1-v^2)^{-\frac{1}{2}}$, the Euclidean norm of $B_v-B_w$ is less than any fixed positive real number if $|v-w|$ is
small enough. Therefore, there is a Lorentz boost $B_w$ such that $B_w$
maps rational points to rational ones and $||B_w-B_v||<\varepsilon$. 
\end{proof}

\begin{lem}\label{lem-comp}
Let $A$ and $B$ be linear bijections of $\R^d$. Let $A'$ and $B'$
linear maps such that $||A-A'||<\varepsilon_1$ and
$||B-B'||<\varepsilon_2$. Then $||BA-B'A'||\le
\varepsilon_1||B||+\varepsilon_1\varepsilon_2+ \varepsilon_2||A||$.
\end{lem}

\begin{proof}
First let us note that
\begin{equation}
||A'||=||A'-A+A||\le ||A'-A||+||A||=\varepsilon_1 + ||A||
\end{equation}
by the triangle inequality. 
Let $\vx\in\R^d$ such that $|\vx|=1$. We have to show that 
\begin{equation}
|BA\vx-B'A'\vx|\le\varepsilon_1||B||
 +\varepsilon_1\varepsilon_2 +\varepsilon_2||A||
\end{equation}
By the triangle inequality and the fact that
$|M\vy|\le||M||\cdot|\vy|$, we have
\begin{multline}
|BA\vx-B'A'\vx|=|BA\vx-BA'\vx+BA'\vx-B'A'\vx|\\\le
|BA\vx-BA'\vx|+|BA'\vx-B'A'\vx|\\ 
\le||B||\cdot|A\vx-A'\vx|+||B-B'||\cdot|A'\vx| 
\\\le||B||\cdot||A-A'||+||B-B'||\cdot||A'|| \\\le\varepsilon_1||B|| +
\varepsilon_2(\varepsilon_1 +||A||) =\varepsilon_1||B|| 
+\varepsilon_1\varepsilon_2+ \varepsilon_2||A||,
\end{multline}
and this is what we wanted to prove.
\end{proof}

\begin{proof}[Proof of Theorem~\ref{thm-lor}]
Every Poincar\'e transformation is a composition of a translation, a Lorentz-boost $B_v$
and an orthogonal transformation. Therefore, Lemmas~\ref{lem-b} and \ref{lem-comp},
together with Theorem~\ref{thm-dot} imply our statement.
\end{proof}

Now we are going to prove Theorem~\ref{thm-rac}. Let $\Id$ be the {\bf
  identity map} of $\rac^d$.  We denote the {\bf origin} of $\Q^n$ by
$\vo$, i.e., 
\begin{equation} 
\vo\de \langle 0,\ldots,0\rangle.
\end{equation}
Let the {\bf time-axis} be defined as the following subset of $\Q^d$:
\begin{equation}\label{eq-taxis}
\taxis\de\{\vx:x_2=\ldots=x_d=0\}.
\end{equation}
Let $H$ be a subset of $\Q^d$ and let $f:\Q^d\rightarrow\Q^d$ be a
map. The {\bf $f$-image} of set $H$ is defined as:
\begin{equation}
f[H]\de\{f(\vx):\vx\in H\}.
\end{equation}
The so-called {\bf worldline} of body $b$ according to observer $m$ is
defined as follows:
\begin{equation}
\wl_m(b)\de\{ \vx: \W(m,b,\vx)\}.
\end{equation}

\begin{proof}

\begin{figure}
\begin{tikzpicture}[scale=2]
\begin{scope}[xshift=-1.5cm]
\draw[blue, very thick] (0,-1)--(0,1) node[right]{$\Id$};
\draw[black!60!green, very thick] (-0.5,-1)--(0.5,1) node[right]{$m$};
\draw[red] (-1,-1)--(1,1);
\draw[red] (-1,1) --(1,-1);
\draw[red] (0,1) ellipse (1 and 0.1);
\draw[red] (0,-1) ellipse (1 and 0.1);
\draw[black!30!brown,thick] (.49,-1) -- node[right]{$b$} (0.26, 1);
\fill (0.3,0.6) node[right]{$m(\vx)$} circle (0.05);
\fill (0,0) circle (0.05);
\end{scope}
\node[above] at (0,0.5) {$m$};
\draw[->,>=stealth] (0.8,0.2) .. controls (0,0.5) .. (-0.8,0.2);
\begin{scope}[xshift=1.5cm]
\draw[black!60!green, very thick] (0,-1)--(0,1) node[right]{$m$};
\draw[black!30!brown, thick] (-0.25,1) -- node[right]{$b$} (0.75, -1);
\draw[red] (-1,-1)--(1,1);
\draw[red] (-1,1)--(1,-1);
\draw[red] (0,1) ellipse (1 and 0.1);
\draw[red] (0,-1) ellipse (1 and 0.1);
\fill (0,0.5) node[right]{$\vx$} circle (0.05);
\fill (0,0) node[left]{$\vo$}  circle (0.05);
\end{scope}
\end{tikzpicture}
\caption{\label{fig-w} Illustration for the proof of Theorem~\ref{thm-rac}}
\end{figure}
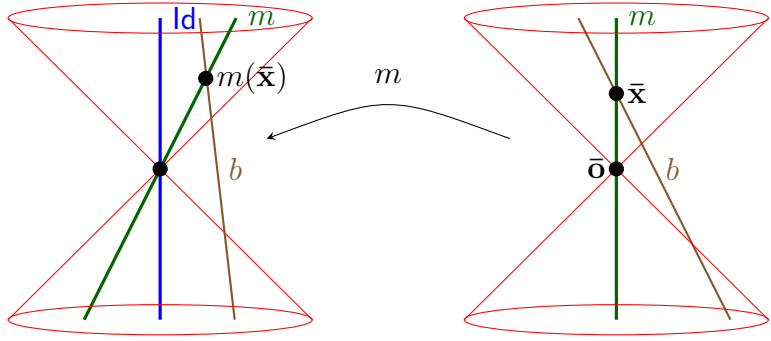

We are going to construct a model of \ax{SpecRel_d} + \ax{AxThExp^-} over
$\rac$.  So let $\langle \Q, +,\cdot,\le\rangle$ be the ordered field
of rational numbers. Let 
\begin{equation}
\Ph \de \{l : l \text{ is a line of slope 1}\},
\end{equation}
\begin{equation}\label{eq-IOb}
\IOb\de\left\{m: m \text{ is a Poincar\'e transformation from $\rac^d$ to $\rac^d$}\right\},
\end{equation}
and let $\B=\IOb\cup\Ph$. 
First we are going to give the worldview of observer
$\Id$. Let 
\begin{equation}
\W(\Id,\Id,\vx)\defiff x_2=\ldots=x_d=0;
\end{equation} 
for any other inertial observer $m$, let
\begin{equation}\label{eq-Idm}
\W(\Id,m,\vx) \defiff \vx\in m[{\taxis}]; 
\end{equation}
and for any light signal $p\in\Ph$, let
\begin{equation}\label{eq-Idb}
\W(\Id,p,\vx)\defiff \vx\in p.
\end{equation}
Now the worldview of observer $\Id$ is given.  From the worldview of
$\Id$, we construct the worldview of another inertial observer $m$ as
follows:
\begin{equation}\label{ww-def}
\W(m,b,\vx)\defiff \W\big(\Id,b,m(\vx)\big)
\end{equation}
for all body $b\in\B$, see Figure~\ref{fig-w}.  

Now we have given the model. Let us see why the axioms of
\ax{SpecRel_d} and \ax{AxThExp^-} are valid in it.

By the above definition of $\W$, if $m$ and $k$ are inertial
observers, then
\begin{equation}\label{eq-iob}
\W(m,k,\vx)\enskip \text{ holds iff }\enskip m(\vx)\in k[\taxis],
\end{equation}
and if $m\in\IOb$ and $p\in\Ph$, then
\begin{equation}\label{eq-b}
\W(m,p,\vx)\enskip\text{ holds iff }\enskip
m(\vx)\in p.
\end{equation}
The worldview transformations between inertial
observers $m$ and $\Id$ is $m$, i.e., $\w_{m\Id}=m$ by equation
\eqref{ww-def}. Therefore, the worldview transformation between
inertial observers $m$ and $k$ is $k^{-1}\circ m$, i.e., 
\begin{equation}\label{eq-w}
\w_{mk}=k^{-1} \circ m
\end{equation}
since $\w_{mk}=\w_{\Id k}\circ\w_{m\Id}$ and $\w_{\Id k}=(\w_{k\Id
})^{-1}$ by the definition of the worldview
transformation \eqref{eq-ww}. Specially, the worldview transformations between
inertial observers are Poincar\'e transformations in these models (as
Theorem~\ref{thm-poi} requires it). Hence
\begin{equation}\label{eq-bij}
\w_{mk} \text{ is a bijection for all inertial observers $m$ and $k$.}
\end{equation}

Axiom \ax{AxPh} is valid for observer $\Id$ by the definition of $\Ph$
and that of his worldview. It is also clear that the speed of light is
1 for observer $\Id$. Axiom \ax{AxPh} is valid for the other observers
since Poincar\'e transformations take lines of slope one to lines of
slope one.  This also show that the speed of light is $1$ according to
every inertial observer, which is the second half of \ax{AxSymD}.

Axiom \ax{AxOField} is valid in this model since
$\rac$ is an ordered field.
Axiom \ax{AxEv} is valid in this model since Poincar\'e
transformations are bijections.
Axiom \ax{AxSelf} is valid in this model since
\begin{multline}
\W(m,m,\vx)\stackrel{\eqref{eq-iob}}{\iff} m(\vx)\in
m[\taxis]\\\stackrel{\eqref{eq-IOb}}{\iff} \vx\in\taxis
\stackrel{\eqref{eq-taxis}}{\iff} x_2=\ldots=x_d=0.
\end{multline}

Any Poincar{\'e} transformation $P$ preserves the spatial distance of
points $\vx,\vy$ for which $x_1=y_1$ and
$P(\vx)_1=P(\vy)_1$. Therefore, inertial observers agree as to the
spatial distance between two events if these two events are
simultaneous for both of them. We have already shown that the speed of
light is 1 for each inertial observers in this model. Hence axiom
\ax{AxSymD} is also valid in this model.

Now we are going to show that \ax{AxThExp^-} is valid in this model.
The $\exists h\, \IOb(h)$ part of axiom \ax{AxThExp} is valid, since
there are Poincar\'e transformations (e.g., $\Id$ is one).  To show
that the rest of axiom \ax{AxThExp^-} is valid, let $m$ be an inertial
observer and let us fix an $\varepsilon>0$ and a $\vv\in\rac^d$ for
which $v_2^2+\ldots v_d^2<1$ and $v_1=1$.  Let $\vet$ be vector
$\langle1,0,\ldots,0\rangle$. Let $L$ be a Lorentz transformation
(i.e., linear Poincar\'e transformation) for which
\begin{equation}\label{v}
\vv=\frac{L(\vet)}{L(\vet)_1}.
\end{equation}
Let $0<\delta<1$ be such that 
\begin{equation}\label{A}
\delta < \frac{\varepsilon L(\vet)_1}{2}\quad\text{ and }
\end{equation}
\begin{equation}\label{B}
\left|\frac{1}{L(\vet)_1}-\frac{1}{x}\right|<\frac{\varepsilon}{2(||L||+1)}
\end{equation}
for any $x$ for which $|L(\vet)_1-x|<\delta$.  By
Theorem~\ref{thm-lor}, there is a Lorentz transformation $L^*$ which
takes rational points to rational ones and $||L-L^*||<\delta$.  Then
\begin{equation}\label{C}
|L(\vet)-L^*(\vet)|<\delta
\end{equation}
since $|\vet|=1$.  We have $|L(\vet)_1-L^*(\vet)_1|<\delta$ since
$|x_1|<|\vx|$ for all $\vx\in\rac^d$. By triangle inequality, we also
have
\begin{equation}\label{E}
|L^*(\vet)|\le|L^*(\vet)-L(\vet)|+|L(\vet)|\le
\delta+||L||\le
1+||L||. 
\end{equation}
Let 
\begin{equation}\label{w}
\vw\de \frac{L^*(\vet)}{L^*(\vet)_1}
\end{equation}
By triangle inequality, we have
\begin{multline}
|\vv-\vw|\stackrel[\eqref{v}]{\eqref{w}}{=}\left|\frac{L(\vet)}{L(\vet)_1}
-\frac{L^*(\vet)}{L^*(\vet)_1}\right|\\ =\left|\frac{L(\vet)}{L(\vet)_1}
-\frac{L^*(\vet)}{L(\vet)_1} +\frac{L^*(\vet)}{L(\vet)_1} -
\frac{L^*(\vet)}{L^*(\vet)_1}\right|\\ \le
\left|\frac{L(\vet)}{L(\vet)_1} -\frac{L^*(\vet)}{L(\vet)_1}\right|
+|L^*(\vet)|\left|\frac{1}{L(\vet)_1}
-\frac{1}{L^*(\vet)_1}\right|\\ \stackrel[\eqref{E}]{}{\le} 
\frac{1}{L(\vet)_1}\left|L(\vet)
-L^*(\vet)\right| +(1+||L||)\left|\frac{1}{L(\vet)_1}
-\frac{1}{L^*(\vet)_1}\right|\\ \stackrel[\eqref{B}]{\eqref{C}}{<}\frac{\delta}{L(\vet)_1} +
\frac{\varepsilon}{2}\stackrel[\eqref{A}]{}{<}\varepsilon
\end{multline}
  Let $\vx,\vy\in\rac^d$ such that there is a $\lambda\in\rac$ such
  that $\vy-\vx=\lambda \vw$. To finish the proof of \ax{AxThExp^-},
  we have to show that there is an inertial observer $k$ such that
  $\W(m,k,\vx )$ and $\W(m,k,\vy)$, i.e., $\vx,\vy\in\wl_m(k)$. Let
  $P^*=L^*+\vx$. $P^*$ is a Poincar\'e transformation taking rational
  points to rational ones. Therefore, there is an inertial observer
  $k$ such that $\w_{km}=P^*$. Since $\wl_m(k)=\w_{km}[\taxis]$, we
  have that $\w_{km}(\vo)=\vx\in\wl_m(k)$ and that $\vy=a
  L^*(\vet)+\vx=\w_{km}(a\vet)\in\wl_m(k)$, where
  $a=\lambda/L^*(\vet)_1$. This shows that \ax{AxThExp^-} is also
  valid in our model.
\end{proof}

\section{Acknowledgments}
This research is supported by the Hungarian Scientific Research Fund
for basic research grants No.~T81188 and No.~PD84093, as well as by a
Bolyai grant for J.~X.~Madar\'asz.

\bibliography{LogRelBib}

\begin{thebibliography}{10}

\bibitem{pezsgo}
H.~Andr{\'e}ka, J.~X. Madar{\'a}sz, {and}~I. N{\'e}meti, with contributions
  from:~A. Andai, G.~S{\'a}gi, I.~Sain, and Cs. T{\H o}ke.
\newblock {\it On the logical structure of relativity theories}.
\newblock Research report, Alfr{\'e}d R{\'e}nyi Institute of Mathematics,
  Hungar. Acad. Sci., Budapest, 2002.
\newblock http://www.math-inst.hu/pub/algebraic-logic/Contents.html.

\bibitem{AMNSamples}
H.~Andr{\'e}ka, J.~X. Madar{\'a}sz, and I.~N{\'e}meti.
\newblock Logical axiomatizations of space-time. {S}amples from the literature.
\newblock In A.~Pr{\'e}kopa and E.~Moln{\'a}r, editors, {\em Non-{E}uclidean
  geometries}, pages 155--185. Springer-Verlag, New York, 2006.

\bibitem{logst}
H.~Andr{\'e}ka, J.~X. Madar{\'a}sz, and I.~N{\'e}meti.
\newblock Logic of space-time and relativity theory.
\newblock In M.~Aiello, I.~Pratt-Hartmann, and J.~van Benthem, editors, {\em
  Handbook of spatial logics}, pages 607--711. Springer-Verlag, Dordrecht,
  2007.

\bibitem{wnst}
H.~Andr\'eka, J.~X. Madar\'asz, I.~N\'emeti, and G.~Sz\'ekely.
\newblock What are the numbers in which spacetime?, 2012.
\newblock arXiv:1204.1350v1 [gr-qc].

\bibitem{ax}
J.~Ax.
\newblock The elementary foundations of spacetime.
\newblock {\em Found. Phys.}, 8(7-8):507--546, 1978.

\bibitem{BL}
G.~Bachman and L.~Narici.
\newblock {\em Functional analysis}.
\newblock Dover Publications Inc., Mineola, NY, 2000.
\newblock Reprint of the 1966 original.

\bibitem{CK}
C.~C. Chang and H.~J. Keisler.
\newblock {\em Model theory}.
\newblock North-Holland Publishing Co., Amsterdam, 1990.

\bibitem{End}
H.~B. Enderton.
\newblock {\em A mathematical introduction to logic}.
\newblock Academic Press, New York, 1972.

\bibitem{fuchs}
L.~Fuchs.
\newblock {\em Partially ordered algebraic systems}.
\newblock Pergamon Press, Oxford, 1963.

\bibitem{CHA}
A.~V. Mikhalev and G.~F. Pilz, editors.
\newblock {\em The concise handbook of algebra}.
\newblock Kluwer Academic Publishers, Dordrecht, 2002.

\bibitem{Rosinger08}
E.~E. Rosinger.
\newblock Two essays on the archimedean versus non-archimedean debate, 2008.
\newblock arXiv:0809.4509v3.

\bibitem{Rosinger09}
E.~E. Rosinger.
\newblock Special relativity in reduced power algebras, 2009.
\newblock arXiv:0903.0296v1.

\bibitem{Rosinger11b}
E.~E. Rosinger.
\newblock Cosmic contact to be, or not to be archimedean.
\newblock {\em Prespacetime Journal}, 2(2):234--248, 2011.

\bibitem{Rosinger11a}
E.~E. Rosinger.
\newblock How far should the principle of relativity go?
\newblock {\em Prespacetime Journal}, 2(2):249--264, 2011.

\bibitem{schmutz}
E.~Schmutz.
\newblock Rational points on the unit sphere.
\newblock {\em Central European Journal of Mathematics}, 6(3):482--487, 2008.

\bibitem{Mike}
M.~Stannett.
\newblock Computing the appearance of physical reality.
\newblock {\em Appl. Math. Comput.}, in press, 2011.

\bibitem{SzPhd}
G.~Sz{\'e}kely.
\newblock {\em First-Order Logic Investigation of Relativity Theory with an
  Emphasis on Accelerated Observers}.
\newblock PhD thesis, E{\"o}tv{\"o}s Lor{\'a}nd Univ., {B}udapest, 2009.

\bibitem{TarskiElge}
A.~Tarski.
\newblock What is elementary geometry?
\newblock In {\em The axiomatic method. With special reference to geometry and
  physics. Proceedings of an International Symposium held at the Univ. of
  Calif., Berkeley, Dec. 26, 1957-Jan. 4, 1958 (edited by L. Henkin, P. Suppes
  and A. Tarski)}, pages 16--29, Amsterdam, 1959. North-Holland Publishing Co.

\bibitem{vaananen}
J.~V{\"a}{\"a}n{\"a}nen.
\newblock Second-order logic and foundations of mathematics.
\newblock {\em Bull. Symbolic Logic}, 7(4):504--520, 2001.

\bibitem{wolenski}
J.~Wole{\'n}ski.
\newblock First-order logic: (philosophical) pro and contra.
\newblock In V.~F. Hendricks et~al., editors, {\em First-Order Logic
  Revisited}, pages 369--398. Logos Verlag, Berlin, 2004.

\end{thebibliography}
\bibliographystyle{plain}

\end{document}